\newcommand{\old}[1]{}
\def\cR{{\cal R}}
\declaretheorem[numberwithin=section]{theorem}
\declaretheorem[sibling=theorem]{lemma}
\declaretheorem[sibling=theorem]{claim}
\declaretheorem[sibling=theorem]{corollary}
\declaretheorem[sibling=theorem]{remark}
\declaretheorem[sibling=theorem]{definition}
\author{Anna R. Karlin\\ University of Washington\\ karlin@cs.washington.edu
	\and Shayan Oveis Gharan\\ University of Washington\\ shayan@cs.washington.edu 
	\and Robbie Weber\\ University of Washington\\ rtweber2@cs.washington.edu }
\begin{document}
\title{A Simply Exponential Upper Bound \\on the Maximum Number of Stable Matchings}
\maketitle
\begin{abstract}
Stable matching is a classical combinatorial problem that has been the subject of intense theoretical and empirical study since its introduction in 1962 in a seminal paper by Gale and Shapley~\cite{gale1962college}.
In this paper, we provide a new upper bound on $f(n)$, the maximum number of stable matchings that a stable matching instance with $n$ men and $n$ women can have.  It has been a long-standing open problem to understand the asymptotic behavior of $f(n)$ as $n\to\infty$, first posed by Donald Knuth in the 1970s \cite{knuth1976mariages}. Until now the best lower bound was approximately  $2.28^n$, and the best upper bound was $2^{n\log n- O(n)}$.
In this paper, we show that for all $n$, $f(n) \leq c^n$ for some universal constant $c$. This matches the lower bound up to the base of the exponent. 
Our proof is based on a reduction to counting the number of downsets of a family of posets that we call ``mixing''. The latter might be of independent interest.
\end{abstract}

\newpage
\section{Introduction} \label{sec:intro}
Stable matching is a classical combinatorial problem that has been the subject of intense theoretical and empirical study since its introduction in a seminal paper by Gale and Shapley in 1962~\cite{gale1962college}. Variants of the algorithm introduced in~\cite{gale1962college} are widely used in practice, e.g. to match medical residents to hospitals. Stable matching is even the focus of the 2012 Nobel Prize in Economics~\cite{nobel}. 

A stable matching instance with $n$ men and $n$ women is defined by a set of preference lists, one per person. Person $i$'s preference list gives a ranking over members of the opposite sex.  The {\em Stable Matching Problem} is to find a matching (i.e., a bijection) between the men and the women that
is {\em stable}, that is, has no {\em blocking pairs}. A man $m$ and a woman $w$ form a blocking pair in a matching if they are not matched to each other, but both prefer the other to their partner in the matching. 
Gale and Shapley~\cite{gale1962college} showed that a stable matching always exists and gave an efficient algorithm to find one.\footnote{Stable matching algorithms were actually developed and used as early as 1951 to match interns to hospitals~\cite{stalnaker1953matching}.}
Since at least one stable matching always exists, a natural question is to determine the maximum number of stable matchings an instance of a given size can have. This problem was posed in the 1970s in a monograph by Knuth \cite{knuth1976mariages}, and was the first of Gusfield and Irving's twelve open problems in their 1989 textbook \cite{gusfield1989stable}. We denote the maximum number of stable matchings an instance with $n$ men and $n$ women can have by $f(n)$. 

Progress on determining the asymptotics of $f(n)$ has been somewhat slow.
The best lower bound is approximately  $2.28^n$, and the best upper bound prior to this paper was $2^{n\log n- O(n)}$. See the related work section for a detailed history.

In this paper, we present an improved upper bound.
\begin{theorem}\label{thm:main}
	There is a universal constant $c$ such that $f(n)$, the number of stable matchings in an instance with $n$ men and $n$ women, is at most $c^n$. 
\end{theorem}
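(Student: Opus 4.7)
The plan is to leverage the classical bijection (due to Irving and Leather) between stable matchings of an instance and downsets (order ideals) of its rotation poset $\Pi$. This is the natural starting point because the set of stable matchings forms a distributive lattice, and by Birkhoff's theorem this lattice is completely determined by $\Pi$. Since $|\Pi|$ can be as large as $\Theta(n^2)$, the trivial bound $2^{|\Pi|}$ yields only $2^{O(n^2)}$, so the proof must exploit structural features of $\Pi$ beyond its size. In particular, I would aim to isolate the feature that each rotation is a cyclic structure involving a set of men and women, and that rotations using overlapping individuals are forced to be comparable in $\Pi$ in a controlled way.

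With that in mind, following the abstract, the next step is to axiomatize a structural property—to be named \emph{mixing}—that (a) is satisfied by every rotation poset built from an instance with $n$ men and $n$ women, and (b) forces the number of downsets to be at most $c^n$. A plausible form for such an axiom would say that the ground set $\Pi$ can be ``charged'' to the $n$ underlying resources (men/women) in such a way that, for any candidate downset $D$, the ``boundary'' of $D$ (its maximal elements together with the minimal elements outside it) interacts with the resources only in a restricted combinatorial pattern. One then verifies this for the rotation poset by inspecting how successive rotations modify the matching.

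The third step is the core counting lemma: any mixing poset with $n$ resources has at most $c^n$ downsets. The most natural attack is by induction, either by (i) choosing a resource, splitting the downsets according to how many of the poset elements charged to that resource lie in $D$, and reducing to smaller mixing posets, or (ii) by encoding each downset by a short trace on the resources and arguing the encoding is injective with codomain of size $c^n$. Both avenues require that the mixing property be preserved under the appropriate contraction or restriction operation.

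I expect the main obstacle to lie precisely in this third step together with the correct formulation of mixing: the property must be strong enough to yield a simply-exponential bound on downsets yet weak enough to actually hold for rotation posets. Balancing these two demands—and, in the inductive proof, ensuring that the structural invariant propagates through the reduction step without losing more than a constant factor per resource removed—is where the substantive combinatorial work will be concentrated.
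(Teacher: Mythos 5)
Your proposal correctly identifies the architecture---reduce via the Irving--Leather bijection to counting downsets of the rotation poset, and isolate a structural property (``mixing'') of that poset that forces a $c^n$ bound---but this architecture is already announced in the abstract, and the two steps where all the mathematical content lives are left as placeholders. First, the property you gesture at (a restriction on how the \emph{boundary of a downset} interacts with the $n$ agents) is not the one that works, and it is not clear it can be made to work. The property the argument actually needs is purely about the poset's chain structure: the rotations containing a fixed agent form a chain (this requires proving that two rotations sharing an agent are comparable, which uses specific facts about rotations, namely that a pair $(m,w)$ appears in at most one rotation and that no rotation lets a man ``skip over'' a stably attainable partner), these $n$ chains cover the poset, and---crucially---any set of $k$ elements meets at least $2\sqrt{k}$ of the chains (because $k$ rotations contain $k$ distinct man--woman pairs, hence at least $2\sqrt{k}$ agents). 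Without pinning down a quantitative condition of this kind, step (b) of your plan cannot even be stated, let alone verified.

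Second, your two suggested attacks on the counting lemma both run into the same wall. A poset covered by $n$ chains of length up to $d=|V|/n$ has, by naive per-chain enumeration, up to $(d+1)^n$ downsets, and $d$ can be as large as $n$; so any argument that processes one resource (chain) at a time and pays a factor per resource proportional to the chain length recovers only the trivial $2^{n\log n-O(n)}$ bound. The missing idea is that the mixing condition guarantees a single \emph{critical element} $v$ that both dominates and is dominated by $\Omega(d^{3/2})$ elements---superlinearly many in $d$. Branching on whether $v$ lies in the downset then removes $\Omega(d^{3/2})$ elements in either case while preserving mixing, and a phase analysis shows only $O(n)$ such binary branchings occur before the poset shrinks to the base case, giving $c^n$ total. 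Establishing the existence of this critical element (via a greedy partition of the chains into disjoint subchains and a height argument exploiting the $2\sqrt{|U|}$ condition) is the technical heart of the proof, and nothing in your proposal substitutes for it.
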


To prove this theorem, we use a result due to  Irving and Leather~\cite{irving1986complexity} that shows that there is a bijection between the stable matchings of an instance $\mathcal I$ and the downsets\footnote{See \autoref{sec:argumentOutline} for definitions of all the relevant terminology.} of a particular partially-ordered set (poset) associated with $\mathcal I$ known as the {\em rotation poset}. We show that the rotation poset associated with a stable matching instance has a particular property that we call {\em $n$-mixing}, and that any poset with this property has at most $c^n$ downsets. All the steps in our proof are elementary.

The bound extends trivially to stable roommates instances. In the stable roommates problem, a set of $n$ agents rank the other $n-1$ agents in the set. The agents are paired off into roommate pairs, which are stable if no two agents would like to leave their partners and be matched to each other.  
A construction of Dean and Munshi \cite{dean2010faster}, demonstrates that a stable roommates instance with $n$ agents can be converted into a stable matching instance with $n$ men and $n$ women, such that the stable roommate assignments correspond to a subset of the stable matchings in the new instance. Using this construction, we can apply our upper bound to Stable Roommates. 
\begin{theorem}
	There is a universal constant $c$, such that the number of stable assignments in a stable roommate instance with $n$ agents is at most $c^n$.
\end{theorem}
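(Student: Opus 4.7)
The plan is to derive this as an immediate corollary of Theorem~\ref{thm:main}, using the reduction of Dean and Munshi~\cite{dean2010faster} that is sketched in the paragraph preceding the theorem statement. Given a stable roommates instance $\mathcal{I}$ on $n$ agents, I would first invoke the Dean--Munshi construction to produce a stable matching instance $\mathcal{I}'$ with $n$ men and $n$ women, together with an injection from the set of stable assignments of $\mathcal{I}$ into the set of stable matchings of $\mathcal{I}'$.

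Once this injection is in hand, the bound follows from a single chain of inequalities: the number of stable roommate assignments of $\mathcal{I}$ is at most the number of stable matchings of $\mathcal{I}'$, which by Theorem~\ref{thm:main} is at most $c^n$. Because $n$ is preserved by the reduction, the universal constant $c$ from Theorem~\ref{thm:main} serves here without modification.

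There is essentially no technical obstacle, since all the real work is done by Theorem~\ref{thm:main} and by~\cite{dean2010faster}. The one feature of the reduction that is crucial to the argument is that it is \emph{size-preserving}: an $n$-agent roommates instance must produce an $n \times n$ matching instance, not a $2n \times 2n$ one, since otherwise we would only obtain a bound of $c^{2n}$ rather than $c^n$. Verifying this property (and the injectivity of the correspondence between stable assignments and a subset of stable matchings of $\mathcal{I}'$) is the only thing to check before concluding; both are precisely what~\cite{dean2010faster} provides, so the corollary falls out once Theorem~\ref{thm:main} is established.
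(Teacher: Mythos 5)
Your proposal is correct and matches the paper's own argument, which likewise just cites the Dean--Munshi reduction to an $n\times n$ stable matching instance and applies Theorem~\ref{thm:main}. (One small note: size preservation is convenient but not actually essential, since even a $2n\times 2n$ instance would give $c^{2n}=(c^2)^n$, still a universal constant to the $n$th power.)
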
  

\subsection{Related Work}

\paragraph{Lower bounds:}
It is trivial to provide instances with $2^{n/2}$ stable matchings by combining disjoint instances of size 2. Irving and Leather constructed a family of instances \cite{irving1986complexity} which has since been shown by Knuth\footnote{Personal communication, as described in \cite{gusfield1989stable}.} to contain at least $\Omega(2.28^n)$ matchings. Irving and Leather's family only has instances for $n$ which is a power of $2$. Benjamin, Converse, and Krieger also provided a lower bound on $f(n)$ by creating a family of instances with $\Omega(2^n\sqrt{n})$ matchings \cite{benjamin1995marry}. While this is fewer matchings than the instances in \cite{irving1986complexity}, Benjamin et al.'s family has instances for every even $n$, not just powers of $2$. In 2002, Thurber extended Irving and Leather's lower bound to all values of $n$. For $n$ powers of $2$, Thurber's construction exactly coincides with Irving and Leather's. For all other $n$, the construction produces a lower bound of $2.28^n / c^{\log n}$ for some constant $c$ \cite{thurber2002concerning}. To date, this lower bound of $\Omega(2.28^n)$ is the best known. We refer the reader to Manlove's textbook for a more thorough description of the history of these lower bounds \cite{manlove2013algorithmics}. 

\paragraph{Upper bounds:} Trivially, there are at most $n!$ stable matchings (as there are at most $n!$ bijections between the men and women). The first progress on upper bounds that we are aware of was made by Stathoupolos in his 2011 Master's thesis \cite{stathopoulos2011variants}, where he proves that the number of stable matchings is at most $O(n! / c^n)$ for some constant $c$. A more recent paper of Drgas-Burchardt and {\'S}witalski shows a weaker upper bound of approximately $\frac{3}{4} n!$ \cite{drgas2013number}. All previous upper bounds have the form $2^{n\log n - O(n)}$. 

\paragraph{Restricted preferences:}
The number of possible stable matchings has also been studied under various models restricting or randomizing the allowable preference lists. If all preference lists are equally likely and selected independently for each agent, Pittel shows that the expected number of stable matchings is $O(n \log n)$ \cite{pittel1989average}. Applying Markov's Inequality shows that the number of stable matchings is polynomial in $n$ with probability $1 - o(1)$. Therefore, the lower bound instances described above are a vanishingly small fraction of all instances. Work of Hoffman, Levy, and Mossel (described in Levy's PhD thesis \cite{levy2017novel}) shows that under a Mallows model~\cite{mallows1957non}, 
where preference lists are selected with probability proportional to the number of inversions in the list, the number of stable matchings is $C^n$ with high probability (where the constant $C$ depends on the exact parameters of the model). 

The number of attainable partners\footnote{ Woman $w$ is an attainable partner of man $m$ if there is a stable matching in which they are matched to each other.} a person can have has also been the subject of much research.  Knuth, Motwani, and Pittel show that the number of attainable partners is $O(\log n)$ with high probability if the lists are uniformly random \cite{knuth1990stable}. Immorlica and Mahdian show that if agents on one side of the instance have random preference lists of length $k$ (and consider all other agents unacceptable) the expected number of agents with more than one attainable partner depends only on $k$ (and not on $n$) \cite{immorlica2005marriage}. Ashlagi, Kanoria, and Leshno show that if the number of men and women is unbalanced, with uniformly random lists, the fraction of agents with more than one attainable partner is $1 - o(1)$ with high probability~\cite{ashlagi2017unbalanced}. 

\paragraph{Counting:}
A natural computational problem is to count the number of stable matchings in a given instance as efficiently as possible. Irving and Leather show that finding the exact number of matchings is $\#P$-complete \cite{irving1986complexity}, so finding an approximate count is a more realistic goal. Bhatnagar, Greenberg, and Randall consider instances where preference lists come from restricted models \cite{bhatnagar2008sampling}; for example, those in which the preference lists reflect linear combinations of $k$ ``attributes'' of the other set, or where every agent appears in a ``range'' of $k$ consecutive positions in every preference list. In both of these cases, they show that a natural Markov Chain Monte Carlo approach does not produce a good approximation (as the chain does not mix efficiently). 
As part of their proof, they show the number of stable matchings can still be as large as $c^n$ for some constant $c < 2.28$, even in these restricted cases. 

A formal hardness result was later shown by Dyer, Goldberg, Greenhill, and Jerrum \cite{dyer2004relative}. They show approximately counting the number of stable matchings is equivalent to approximately counting for a class of problems, canonically represented by $\#$BIS.\footnote{More specifically, they show an FPRAS for the number of stable matchings exists if and only if one exists for $\#$BIS, approximately counting the number of independent sets in a bipartite graph. Goldberg and Jerrum conjecture that no such FPRAS exists \cite{goldberg2012approximating}. See \cite{chebolu2012complexity} for formal definitions.} This hardness result was strengthened by Chebolu, Goldberg, and Martin \cite{chebolu2012complexity} to hold even if the instances come from some of the restricted classes of \cite{bhatnagar2008sampling}.

The heart of all of these results is the rotation poset (originally developed in \cite{irving1986complexity}), which we use and describe in \autoref{sec:rotationGraph}.

\paragraph{Stable matching in general:} See the books by Roth and Sotomayor~\cite{roth1992two}, Gusfield and Irving~\cite{gusfield1989stable}, Manlove~\cite{manlove2013algorithmics}, and Knuth~\cite{knuth1997stable} for more about the topic of stable matching. For many examples of stable matching in the real world, see~\cite{roth2015}.

\section{Preliminaries and main technical theorem} \label{sec:argumentOutline}

In this section, we review standard terminology regarding partially ordered sets, describe the key property of a poset we will use, and state our main technical theorem 
(\autoref{thm:mixingDownsets}). 

\begin{definition}[Poset]
	A partially ordered set (or poset) $(V, \prec)$, is defined by a 
	set $V$ and a binary relation, $\prec$, on $V$ satisfying: 
		\begin{itemize}
			\item {\bf Antisymmetry:} For all distinct $u,v \in V$ if $u \prec v$ then $v \not\prec u$,  and
			\item {\bf Transitivity:} for all $u,v,w \in V$ if $u \prec v$ and $v \prec w$ then $u \prec w$.
		\end{itemize} 
\end{definition}

\noindent	Two elements $u,v \in V$ are {\em comparable} if $u \prec v$ or $v \prec u$. They are {\em incomparable} otherwise.  

\noindent	If $u \prec v$, we say $u$ is {\em dominated} by $v$ and $v$ {\em dominates} $u$. 
\begin{definition}[Chain]
	A set $S$ of elements is called a {\em chain} if each pair of elements in $S$ is comparable. 
	In other words, for $\ell>0$, a chain of {\em length} $\ell$ is a sequence of elements $v_1 \prec v_2 \prec v_3 \prec \dots \prec v_\ell$.
\end{definition}
\noindent A set of elements is called an {\em antichain} if they are pairwise incomparable. 
\begin{definition}[Downset]
A {\em downset} of a partial order is an antichain and all elements dominated by some element of that antichain.
\end{definition}
\noindent Observe that a downset is  closed under $\prec$. That is, for any downset $S$, if $v \in S$ and $u \prec v$ then $u \in S$. 

The following is the key property of the posets associated with stable matching instances that we will use in the proof.
\begin{definition}[$n$-mixing]\label{def:nmixing}
	A poset $(V, \prec)$ is $n$-mixing if there exist $n$ chains $C_1, \ldots, C_n$  (not necessarily disjoint) such that
	\begin{enumerate}[i)]
		\item Every element of $V$ belongs to at least one chain, i.e., $\cup_{i=1}^n C_i = V$,
		\item  For any  $U \subseteq V$, there are at least $2\sqrt{|U|}$ chains each containing an element of $U$.
	\end{enumerate}
\end{definition}

\vspace{0.1in}
Observe that if a poset is 
formed by $n$ disjoint chains, each of length $\ell$, it has about $\ell^n$ downsets, and $\ell$ could be arbitrarily bigger than $n$. But such a  poset is not mixing, since taking $U$ to be the set of elements on one of the chains violates property ii) of mixing. For an example of a mixing poset, see \autoref{fig:mixing}. We can now state our main technical theorem.

\vspace{0.1in}
\begin{figure}
\begin{center}
\begin{tikzpicture}[rotate=45,scale=0.8]
	\foreach \i in {0, 1, 2, 3, 4,  5, 6}{
		\foreach \j in {0, 1, 2, 3, 4, 5, 6}{
			\ifthenelse  	{\equal{\j}{3} \OR \equal{\j}{4} \OR \equal{\i}{3} \OR \equal{\i}{4}	}{
				\node at (\i,\j) (\i_\j) {};
			}{
				\node [draw,circle,fill] at (\i,\j) (\i_\j) {}; 
			}
		}
		\ifthenelse {\equal{\i}{3} \OR \equal{\i}{4}}{}{
			\foreach \j/\k in {1/0, 2/1, 3/2, 5/4, 6/5}{
					\draw [color=blue,->,line width=1.1pt] (\i_\j) edge (\i_\k); 
			}
		}
	}
	\foreach \j in {0,1,2,5,6}{
		\foreach \i/\k in {1/0, 2/1, 3/2, 5/4, 6/5}{
			\draw [color=red,->,line width=1.1pt] (\i_\j) edge (\k_\j);
		}
	}
		\draw [decorate,decoration={brace,amplitude=10pt},xshift=-4pt,yshift=0pt]
(-0.3,-0.3) -- (-0.3,6.2) node [black,midway,xshift=-0.6cm] 
{$n$};
	\draw [dotted, line width=1.3pt]  (1,3) -- (1,4) (5.5,3) -- (5.5,4) (3,1)--(4,1) (3,5.5)--(4,5.5) ; 
		\draw [decorate,decoration={brace,amplitude=10pt,mirror},xshift=4pt,yshift=0pt]
(-0.5,-0.3) -- (6.2,-0.3) node [black,midway,yshift=-0.6cm] 
{$n$};
\end{tikzpicture}
\caption{A $2n$-mixing poset with respect to chains defined by the red and blue paths. A path in the graph from $v$ to $u$ indicates that $u \prec v$. That is, the poset is the transitive closure of the arrows shown. For any set $U$ of $k$ elements, there are at least $2\sqrt{k}$ chains that contain one of these elements. }
\label{fig:mixing}
\end{center}
\end{figure}

\begin{theorem}
\label{thm:mixingDownsets}
	There is a universal constant $c$, such that if a poset is $n$-mixing, then it has at most $c^ n$ downsets. 
\end{theorem}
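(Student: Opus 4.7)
The plan is to bound the number of downsets of an $n$-mixing poset $(V,\prec)$ with chains $C_1,\ldots,C_n$ by encoding each downset and counting encodings.

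First I would extract basic consequences of condition~(ii). Applying mixing to $U=\{v\}$ shows that every element of $V$ lies in at least two chains; applying it to $U=V$ gives $|V|\le n^2/4$; applying it to $U=C_i$ gives $|C_i|\le n^2/4$ for each $i$. Since distinct elements of an antichain cannot lie on a common chain, any antichain $A$ has disjoint chain-supports, so $\sum_{v\in A} d(v)\le n$, and in particular $|A|\le n/2$.

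Next I would encode each downset $S$ by the profile $(k_1,\ldots,k_n)$ with $k_i=|S\cap C_i|$; since $S\cap C_i$ is a prefix of $C_i$, this profile determines $S$. Mixing applied to $S$ forces $|\{i:k_i>0\}|\ge 2\sqrt{|S|}$. This suggests induction on $n$: remove $C_n$; the residual poset $V\setminus C_n$ is $(n-1)$-mixing with the restricted chains, because any $U\subseteq V\setminus C_n$ satisfies $U\cap C_n=\emptyset$, so the $\ge 2\sqrt{|U|}$ chains hitting $U$ in $V$ lie among $C_1,\ldots,C_{n-1}$. I would then partition downsets of $V$ by $k_n\in\{0,1,\ldots,|C_n|\}$: the $k_n=0$ case is in bijection with downsets of $V\setminus C_n$, while for $k_n=k\ge 1$ (writing $v_1\prec\cdots\prec v_{|C_n|}$ for $C_n$), fixing the prefix of $C_n$ of length $k$ forces $D(v_k)\subseteq S$ and $S\cap U(v_{k+1})=\emptyset$, so the remaining freedom is a downset of the $(n-1)$-mixing subposet $W_k:=V\setminus(D(v_k)\cup U(v_{k+1}))$. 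By induction each piece contributes at most $c^{n-1}$.

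The main obstacle is that this naive recursion yields only $f(n)\le(|C_n|+1)\cdot c^{n-1}$, which is far from $c^n$ when $|C_n|$ is large; and $|C_n|$ can be of order $n^2$ (for instance in the grid poset of \autoref{fig:mixing}, every chain has length $\sim n$), so one cannot hope to find a universally short chain to delete. To close the induction one must show that the $|C_n|+1$ slices, though ostensibly independent, share enough structure that their total downset count is only $O(c^{n-1})$ rather than $|C_n|\cdot c^{n-1}$. A helpful sanity check is the grid, where Young diagrams in an $n\times n$ box are encoded as lattice paths of length $2n$, giving $\binom{2n}{n}\le 4^n$ downsets rather than the naive $(n+1)^n$. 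Generalizing this ``$O(1)$ bits per chain'' savings to arbitrary $n$-mixing posets---by using the mixing property globally across consecutive slices $W_k, W_{k+1}$, not merely to verify mixing of each $W_k$ in isolation---is the step I expect to be the hardest part of the argument.
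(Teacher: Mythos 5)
Your setup is sound as far as it goes: the prefix-profile encoding of a downset, the bound $|V|\le n^2/4$, and the observation that deleting a chain $C_n$ leaves an $(n-1)$-mixing poset are all correct. But the proposal does not contain a proof of the theorem. The recursion you set up gives $f(n)\le (|C_n|+1)\,c^{n-1}$, and since $|C_n|$ can be as large as $\Theta(n)$ (or worse), iterating it yields roughly $\prod_i(|C_i|+1)$, which is the trivial $n^{O(n)}$ bound the theorem is meant to beat. You correctly identify this as the obstacle and then state that overcoming it---showing the $|C_n|+1$ slices jointly contribute only $O(c^{n-1})$---is ``the step I expect to be the hardest part of the argument.'' That step is the entire content of the theorem, and no mechanism for it is proposed beyond the suggestive analogy with lattice paths in the grid. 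As written, this is a genuine gap, not a completed argument.

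For comparison, the paper avoids peeling off whole chains. Instead it branches on a single well-chosen element: \autoref{lma:criticalNode} shows every $n$-mixing poset with $d=|V|/n\ge d_0$ contains a $(c_0 d^{3/2})$-critical element $v$, i.e.\ one that dominates and is dominated by $c_0 d^{3/2}$ elements. (This is proved by greedily distributing elements into $n$ disjoint subchains $S_i\subseteq C_i$, level by level, always choosing a currently-shortest eligible subchain; a node of height $\ell$ then dominates $\approx\ell/2$ nodes on each of the $\ge 2\sqrt{\ell/2}$ chains meeting its top half, and more than half the nodes have height $\ge d/2$, so the ``dominates many'' and ``is dominated by many'' sets must intersect.) Branching on whether $v$ lies in the downset removes $v$ together with everything it dominates, or everything dominating it, so each binary branching shrinks $|V|$ by $\Omega(d^{3/2})$; a phase-by-phase count shows only $O(n)$ branchings occur before $d$ drops to $d_0$, at which point the trivial bound $(d_0+1)^n$ applies, giving $2^{O(n)}(d_0+1)^n=c^n$ in total. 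This quantitative gain of $d^{3/2}$ over the naive $d$ per deletion is precisely what substitutes for the ``$O(1)$ bits per chain'' savings you were hoping to extract, and it is the ingredient your proposal is missing.
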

 
Note that the $n$-mixing property immediately implies that the poset has at most $n^2$ elements; just let $U=V$ in the above definition. A poset with $n^2$ elements covered by $n$ chains can have at most $(n+1)^{n}$ downsets (this is achieved for $n$ equal length chains). So, the main contribution of the above theorem is to improve this trivial upper bound to $c^n$ for some constant $c$.

\autoref{thm:mixingDownsets} is the main technical contribution of this work. The proof is contained in \autoref{sec:MainProof}.
To complete the proof of \autoref{thm:main} we use the following theorem relating mixing posets to stable matchings.
\begin{restatable}{theorem}{rotationReduction}
\label{thm:rotationreduction}
For every stable matching instance $\mathcal I$ with a total of $n$ men and women, there exists an $n$-mixing poset $(\cR,\prec)$, called the rotation poset, such that the number of downsets of the poset is equal to the number of stable matchings of $\mathcal I$.	
\end{restatable}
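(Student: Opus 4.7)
The plan is to invoke the Irving--Leather construction to obtain a rotation poset $(\cR, \prec)$ whose downsets are in bijection with the stable matchings of $\mathcal I$, and then to exhibit $n$ chains witnessing the $n$-mixing property. Every rotation $\rho \in \cR$ is a cyclic sequence of pairs $(m_0, w_0), (m_1, w_1), \ldots, (m_{k-1}, w_{k-1})$ of length $k \ge 2$, so each rotation involves at least two men and at least two women. For each of the $n$ people $p$ in $\mathcal I$, I take $C_p$ to be the set of rotations that contain $p$ as one of the $m_i$ or $w_i$. This gives $n$ chains, and property (i) of \autoref{def:nmixing} is immediate: every rotation involves at least four people, and hence lies in at least four of the $C_p$.

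To justify that each $C_p$ is actually a chain in $(\cR, \prec)$, I would appeal to the familiar lattice-theoretic picture: the stable matchings of $\mathcal I$ form a distributive lattice, and the sequential elimination of rotations transforms the man-optimal matching into the woman-optimal one. Along this process the partner of any fixed person $p$ moves monotonically along $p$'s preference list, with each rotation in $C_p$ accounting for exactly one step of this monotone progression. Consequently any two rotations in $C_p$ must be eliminated in a fixed relative order, which is the same as saying they are comparable in $\prec$, so $C_p$ is a chain.

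For property (ii), I rely on the classical structural lemma from~\cite{irving1986complexity} that every (man, woman) pair $(m,w)$ is ``broken'' by at most one rotation, i.e.\ appears as one of the pairs $(m_i, w_i)$ in at most one $\rho \in \cR$. Fix any $U \subseteq \cR$, and let $M_U$ and $W_U$ be the sets of men and women appearing in some rotation of $U$; the number of chains that meet $U$ is then exactly $|M_U| + |W_U|$. Since each $\rho \in U$ breaks $|\rho| \ge 2$ distinct pairs, all contained in $M_U \times W_U$,
\[
2|U| \;\le\; \sum_{\rho \in U} |\rho| \;\le\; |M_U|\cdot|W_U|,
\]
and AM-GM gives $|M_U| + |W_U| \ge 2\sqrt{|M_U|\,|W_U|} \ge 2\sqrt{2|U|} \ge 2\sqrt{|U|}$, which is exactly property (ii).

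The main obstacle is expository rather than combinatorial: one has to extract from the Irving--Leather theory the two structural ingredients above, namely that the rotations containing a fixed person are totally ordered in $\prec$ and that each stable pair $(m,w)$ is broken by a unique rotation. Once these ingredients are in hand, the mixing property falls out of the one-line AM-GM double count above, so the argument is short and essentially mechanical modulo the standard background on rotation posets.
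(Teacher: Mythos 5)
Your proposal matches the paper's proof essentially step for step: the same chains (one per agent, consisting of all rotations containing that agent), the same invocation of Irving--Leather's downset bijection, and the same counting argument for property (ii) via the uniqueness of pair appearances (the paper phrases it as a contrapositive, you phrase it directly via AM--GM, even getting the slightly stronger bound $2\sqrt{2|U|}$). The one place you defer to ``standard background'' --- that rotations sharing an agent are comparable --- is exactly the claim the paper takes care to prove (via a contradiction combining \autoref{fact:noSkipping} and \autoref{fact:oneAppearance}), precisely because it is only implicit in the literature; your monotone-progression sketch silently assumes the person's partner cannot skip over a stable partner, which is the content of that lemma, so a complete write-up would need to include that short argument.
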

 Note that \autoref{thm:mixingDownsets} and \autoref{thm:rotationreduction} immediately imply \autoref{thm:main}. 
We prove \autoref{thm:rotationreduction} in  \autoref{sec:rotationGraph}, by combining existing observations about the rotation poset.
\section{Proof of main technical theorem}\label{sec:MainProof} 
In this section we prove \autoref{thm:mixingDownsets}.
The proof proceeds by finding an element $v$ of the poset which dominates and is dominated by many elements. We then count downsets by
considering those downsets that contain
$v$ and those that do not. Since $v$ dominates and is dominated by many
elements, the size of each remaining instance is significantly smaller, yielding the bound.

Formally, we say an element is  {\em $\alpha$-critical} if it dominates $\alpha$ elements and is dominated by $\alpha$ elements.
The key lemma 
is that there is always an $\Omega((|V|/n)^{3/2})$-critical element.

\begin{lemma} \label{lma:criticalNode}
	Let $(V, \prec)$ be an $n$-mixing poset with respect to chains $C_1, \ldots, C_n$, and define $d = \frac{|V|}{n}$. For some universal constants $d_0>1$ and $c_0>0$,  there is an element $v \in V$ such that $v$ is $(c_0 d^{3/2})$-critical as long as $d \ge d_0$.
\end{lemma}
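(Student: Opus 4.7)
The plan is to argue by contradiction: assume no element of $V$ is $\alpha$-critical for $\alpha = c_0 d^{3/2}$, and derive a contradiction to $|V| = nd$ when $c_0$ is small. Writing $D^-(v) = \{u : u \prec v\}$ and $D^+(v) = \{u : v \prec u\}$, the hypothesis says every $v$ has $\min(|D^-(v)|, |D^+(v)|) < \alpha$. Partition $V$ into $V^- := \{v : |D^-(v)| < \alpha\}$ and $V^+ := V \setminus V^-$, which is then contained in $\{v : |D^+(v)| < \alpha\}$. Since the position $j$ of $v_j$ in a chain $C_i$ of length $\ell_i$ forces $|D^-(v_j)| \geq j-1$ and $|D^+(v_j)| \geq \ell_i - j$, the set $V^- \cap C_i$ lies in the first $\alpha$ positions of $C_i$ and $V^+ \cap C_i$ in the last $\alpha$ positions. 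In particular, every chain has length at most $2\alpha$.

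This structural observation alone yields the elementary bound $nd = |V| \leq \sum_i \ell_i \leq 2n\alpha$, so $\alpha \geq d/2$. This uses only property (i) of $n$-mixing; to amplify by the missing $\sqrt{d}$ factor I would invoke property (ii). The cleanest approach is to apply mixing to the sublevel sets $D_k := \{v : |D^-(v)| \leq k\}$ for $k \in [1,\alpha]$. Each $D_k$ is a downset whose intersection with any chain has size at most $k+1$ (again by the position-rank inequality), and by mixing $D_k$ touches at least $2\sqrt{|D_k|}$ chains. Combining the per-chain cap $|D_k| \leq (k{+}1) \cdot t(D_k)$ with the mixing inequality $t(D_k) \geq 2\sqrt{|D_k|}$ (and a dyadic/telescoping count across the levels $k$), together with the symmetric analysis for the upper levels corresponding to $V^+$, should turn the linear bound $|V^-| \leq n\alpha$ into a bound of the form $|V| \leq O(n\alpha^{2/3})$, i.e., $\alpha \geq \Omega(d^{3/2})$.

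The main obstacle is precisely this amplification. The crude combination $|V^-| \leq \alpha \cdot t(V^-) \leq \alpha n$ only recovers the linear-in-$\alpha$ bound, and the mixing inequality is a lower bound on chain coverage, not directly usable as an upper bound on $|V^-|$; so the proof must exploit more than just the size of $V^-$. A promising alternative I would fall back on is a case split on the maximum chain length $L$: if $L \geq 2\alpha$, the median of the longest chain is already $\alpha$-critical and we are done; otherwise we are in a short-chain regime where $\ell_i \leq 2\alpha$ uniformly, and the challenge is to apply mixing to the right intermediate object (the union of ``lower halves'' $\bigcup_i \{v_1,\ldots,v_{\lfloor \ell_i/2\rfloor}\}$, or a carefully chosen sublevel $D_k$) so that its chain-coverage lower bound combines with the per-chain cap of $\alpha$ to force $|V| = O(n \alpha^{2/3})$. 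Identifying this right subset in the short-chain regime is what I expect to be the hardest part.
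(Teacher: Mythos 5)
Your proposal correctly isolates where the difficulty lies, but it does not close the gap it identifies, and the amplification step you sketch does not work as stated. From $|D_k| \le (k+1)\,t(D_k)$ together with $t(D_k)\ge 2\sqrt{|D_k|}$ one cannot deduce $|V| = O(n\alpha^{2/3})$: the mixing inequality is a \emph{lower} bound on the number of chains touched, and the per-chain cap is an upper bound on $|D_k|$ in terms of that same quantity, so the two are simultaneously satisfiable for every $|D_k|$ up to $(k+1)n$, and no contradiction with $|V|=nd$ emerges beyond the linear bound $\alpha\ge d/2$ that you already obtained from property (i) alone. The underlying problem is that a node's position within a single chain $C_i$ only certifies that it dominates many nodes \emph{of that one chain}; to reach $\Omega(d^{3/2})$ you must certify that some node dominates roughly $d/2$ nodes on each of roughly $\sqrt{d}$ \emph{different} chains, and nothing in your setup produces such a witness. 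Your fallback case split does not help either: the long-chain case needs a chain of length $2c_0d^{3/2}$, which need not exist, and the short-chain regime is exactly where you acknowledge being stuck.

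The paper's missing ingredient is a balanced greedy partition of $V$ into $n$ \emph{disjoint} subchains $S_i\subseteq C_i$: process the DAG level by level starting from the sinks, and assign each node $u$ to the currently shortest subchain among $\{S_j: u\in C_j\}$. If $u$ is the $h$-th node placed on its subchain, the greedy rule guarantees that \emph{every} subchain $S_j$ with $u\in C_j$ already held at least $h-1$ nodes, all of which $u$ dominates. Now take $u$ of height $\ell\ge\lfloor d/2\rfloor$ and apply the mixing property to the nodes of heights $\lceil \ell/2\rceil$ through $\ell$ on $u$'s own subchain: they touch at least $2\sqrt{\lfloor \ell/2\rfloor}$ chains, and for each such chain the touching node has height at least $\lceil\ell/2\rceil$ and is dominated by $u$, so $u$ dominates at least $\lfloor \ell/2\rfloor-1$ nodes of that (disjoint) subchain, for a total of $\Omega(d^{3/2})$. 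Disjointness of the $S_i$ forces more than half of $V$ to have height at least $\lfloor d/2\rfloor$, and the symmetric construction built up from the sources yields a set of more than $|V|/2$ nodes each dominated by $\Omega(d^{3/2})$ nodes; the two sets intersect. This disjoint, load-balanced subchain structure is precisely the ``right intermediate object'' your last paragraph is searching for; without it, or an equivalent device, the proposal remains a plan rather than a proof.
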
 
\noindent We prove \autoref{lma:criticalNode} in \autoref{sec:criticalNode} via a counting argument. 

In the rest of this section we prove \autoref{thm:mixingDownsets} using \autoref{lma:criticalNode}. 
We bound the number of downsets by induction on $d$.

	Our base case is when $d = d_0$. 
	In this case, the number of downsets is maximized when the chains are all the same length, so we have an upper bound of $(d_0+1)^n$. 
	
 For larger $d$, first we identify a $\left(c_0 d^{3/2}\right)$-critical element $v$. The number of downsets containing $v$ is the number of downsets in
 the poset remaining after we delete $v$ and everything it dominates. Similarly, the number of downsets not containing $v$ is
 the number of downsets in the poset remaining after we delete $v$ and everything dominating $v$. In both cases, the resulting poset is still $n$-mixing, so we can induct. 
 We call such a step (choosing a critical element) an {\em iteration}.
 
It remains to bound the number of downsets that this process enumerates. 
By the $n$-mixing property, there are at most $n^2$ elements in the initial poset. 
We partition the iterations into phases, where in phase $i$ we reduce the size of the poset from $\frac{n^{2}}{2^i}$ elements to $\frac{n^2}{2^{i+1}}$ elements. By definition, in phase $i$, $d \geq \frac{n}{2^{i+1}}$. So, we can bound the number of iterations required in phase $i$ (call it $k_i$) by:
\[  c_0 \left( \frac{n}{2^{i+1}} \right)^{3/2} k_i >  \frac{n^2}{2^{i+1}}.\]
Rearranging, we see that it suffices to choose $k_i = 2^{(i+1)/2}\sqrt{n}/c_0 $. We continue until $d = d_0$. Summing across all phases, the number of choices to make is at most
\[ \frac{\sqrt{n}}{c_0}\sum \limits_{i=0}^{\log n} 2^{(i+1)/2} < 5\frac{n}{c_0}. \]
So, the algorithm enumerates at most $(d_0+1)^n$ downsets in the base case and it makes at most $5n/c_0$ choices during the inductive process. Thus, the number of downsets is at most $(d_0 + 1)^n 2^{5n/c_0} = c^n$ as required.  

\subsection{Proof of main technical lemma} \label{sec:criticalNode}

Finally, we prove \autoref{lma:criticalNode}, i.e. we show that any $n$-mixing poset $(V, \prec)$ contains a $c_0{d^{3/2}}$-critical element as long as $d \ge d_0$. (Recall that $d= |V|/n$.)

We make use of the standard graph representation of the partial order: In this graph there is a node for each element of the partial order, and  a directed edge from $v$ to $u$ if $u \prec v$. Of course,
this directed graph is acyclic. Henceforth, we refer only to this DAG rather than to the poset and partition the nodes of the DAG into {\em levels} as follows: Level $1$ nodes are those with no outgoing edges, i.e., sinks of the DAG, and level $i$ nodes are those whose longest path to a sink (i.e., a level $1$ node) has exactly $i$ nodes. Note that each level is an antichain.

Next, we create $n$ disjoint subchains $S_1, \ldots, S_n$. The subchain $S_i$ will be a subset of the nodes in $C_i$. We perform the assignment of nodes to subchains by processing up the DAG level by level.  Initialize every subchain $S_i$ to be empty. For each node $u$, consider the set of indices $I(u)  = \{j : u \in C_j\}$. Assign $u$ to the subchain for an index in $I(u)$ which currently has the fewest nodes among those chains, i.e. $\arg\min_{j \in I(u)} |S_j|$, breaking ties arbitrarily.   If $u$ is the $k^{th}$ node
assigned to a subchain $S_i$, then we say the {\em height} of $u$ is $k$.
By construction, the following properties hold: 

(a) The $S_i$'s are chains, since $S_i\subseteq C_i$. 

(b) The $S_i$'s  are disjoint. 

(c) If $u$ has height $h$, then $u$ dominates at least $h-1$ nodes
in each of the subchains $S_j$ such that $j\in I(u)$ (i.e., $\{j : u \in C_j\}$).

\begin{claim}
Let $\overline{D}$ be the set of nodes  of height at least $\lfloor d/2\rfloor$.
Each node  $u\in \overline{D}$ dominates at least  $c_0 \cdot d^{3/2}$ nodes,
for $d \ge d_0$, where $c_0$ and $d_0$ are universal constants.

\end{claim}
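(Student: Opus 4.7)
The plan is to combine the height $h := h_u \ge \lfloor d/2 \rfloor$ of $u$ with the $n$-mixing property to show that $u$ dominates many elements not only in its own subchain, but spread across many other subchains as well. Let $i$ denote the index of the subchain to which $u$ was assigned, and write $v_1 \prec v_2 \prec \cdots \prec v_{h-1}$ for the elements of $S_i$ lying below $u$, so that $v_j$ is at height $j$ in $S_i$. All $h - 1$ of these elements are dominated by $u$ by transitivity in the chain $C_i$, giving $\Theta(d)$ dominated nodes for free; since the target is $\Omega(d^{3/2})$, the remaining contributions must come from other subchains.

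The key step is to apply the $n$-mixing property to the \emph{top} portion of this chain, namely to $U := \{v_{h-k}, v_{h-k+1}, \ldots, v_{h-1}\}$ for a parameter $k = \Theta(h)$ to be optimized at the end (concretely, one may eventually take $k = \lfloor (h-1)/3 \rfloor$). Since $|U| = k$, mixing guarantees at least $2\sqrt{k}$ chains $C_\ell$ each contain some element of $U$; one of these is $C_i$ itself, leaving at least $2\sqrt{k} - 1$ distinct ``auxiliary'' chains $C_b$ with $b \neq i$. For each such $C_b$, pick any $v_j \in U \cap C_b$; then $b \in I(v_j)$ and $j \ge h - k$, so property (c) applied to $v_j$ produces at least $j - 1 \ge h - k - 1$ elements of $S_b$ dominated by $v_j$, and hence by $u$ via transitivity.

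Since the subchains $S_b$ are pairwise disjoint and all disjoint from $S_i$, summing these contributions shows that $u$ dominates at least
\[
(h - 1) + (2\sqrt{k} - 1)(h - k - 1)
\]
elements. A value of $k$ of order $h$ balances the two factors of the second term and makes it $\Omega(h^{3/2})$; combined with $h \ge \lfloor d/2 \rfloor$ this is $\Omega(d^{3/2})$, and one can extract universal $c_0$ and $d_0$ by taking $d_0$ large enough to absorb the floors and the linear $h - 1$ correction. I expect the main subtle point to be precisely this choice of $U$: if one applied mixing to the full set $\{v_1, \ldots, v_{h-1}\}$, the auxiliary chains it produces could each pass through only low-height $v_j$'s, for which property (c) would guarantee a negligible contribution. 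Restricting to the top $k$ of the $v_j$'s forces every auxiliary chain to hit some $v_j$ with $j \ge h - k$, which makes each one contribute $\Omega(h)$ elements to a disjoint subchain and thereby makes the total $\Omega(h^{3/2})$.
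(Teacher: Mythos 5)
Your proposal is correct and follows essentially the same route as the paper: the paper also applies the mixing property to the top constant fraction (there, the top half rather than the top third) of the subchain $S_i$ below $u$, obtains $\Omega(\sqrt{h})$ chains meeting that set, and uses property (c) plus disjointness of the subchains to collect $\Omega(h)$ dominated nodes from each. The only differences are the choice of constant fraction and your explicit $-1$ to exclude $C_i$ itself, neither of which changes the argument.
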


\begin{proof}
Suppose that $u\in S_i\cap \overline{D}$ is at height $\ell \ge \lfloor d/2\rfloor$ and let $D(u)$ be
the nodes in $S_i$ of height $\lceil\ell/2\rceil$ through $\ell$.
By the mixing property, these nodes lie on at least $2 \sqrt{\lfloor\ell/2\rfloor}$
chains $\mathcal C$. Moreover, by construction, on each
subchain $S_j \in \mathcal C$, at least $\lfloor\ell/2\rfloor - 1$ nodes are dominated
by some node in $D(u)$ and hence are dominated by $u$.
Therefore, $u$ dominates at least $(\lfloor\ell/2\rfloor -1)\cdot 2 \sqrt{\lfloor\ell/2\rfloor}  +  (\lfloor\ell/2\rfloor - 1) = \Omega(d^{3/2})$ nodes.
\end{proof}
Since the number of nodes in the DAG is 
$dn$, we conclude:\begin{corollary}
There is a set $\overline{D}$ of strictly more than $|V|/2$ nodes, that each dominate $\Omega(d^{3/2})$ nodes.
\end{corollary}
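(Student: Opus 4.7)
The plan is to derive the corollary from the preceding claim by a direct counting argument. The claim already guarantees that every node of height at least $\lfloor d/2 \rfloor$ dominates $\Omega(d^{3/2})$ other nodes, so I would simply take $\overline{D}$ to be exactly this ``high'' set; the domination requirement is then immediate, and the only remaining task is to verify that $|\overline{D}| > |V|/2$.

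To lower bound $|\overline{D}|$ I would instead upper bound its complement, namely the set of ``low'' nodes of height strictly less than $\lfloor d/2 \rfloor$. Recall from the construction in \autoref{sec:criticalNode} that every node of $V$ is assigned to exactly one subchain $S_i$ (this is well-defined since property (i) of $n$-mixing guarantees $I(u) \neq \emptyset$ for every $u$) and that the $S_i$'s are pairwise disjoint, so $\{S_1,\ldots,S_n\}$ partitions $V$. Within a single subchain $S_i$ the heights of its elements are exactly $1,2,\ldots,|S_i|$, so $S_i$ contributes at most $\lfloor d/2 \rfloor - 1$ low nodes.

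Summing over all $n$ subchains, the total number of low nodes is at most $n(\lfloor d/2\rfloor - 1) \le n(d/2 - 1) < dn/2 = |V|/2$. Hence $|\overline{D}| > |V|/2$, which combined with the claim yields the corollary. The hardest step of the surrounding lemma was really the claim itself, where the $n$-mixing property had to be invoked to turn a chain of height $\Omega(d)$ into $\Omega(\sqrt{d})$ subchains each contributing $\Omega(d)$ dominated nodes; once that is in hand, the passage to the corollary is pure arithmetic and presents no real obstacle.
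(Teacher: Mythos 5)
Your proof is correct and matches the paper's (very terse) argument: the paper's corollary likewise follows from the claim by noting that the $n$ disjoint subchains can each contain at most $\lfloor d/2\rfloor - 1$ nodes of low height, so fewer than $dn/2 = |V|/2$ nodes fall outside $\overline{D}$. You have simply spelled out the counting step the paper compresses into ``Since the number of nodes in the DAG is $dn$, we conclude.''
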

A symmetric argument in which subchains are built starting from the sources
of the DAG shows that there  is a set $\underline{D}$ of strictly more than $|V|/2$ nodes that are dominated by $\Omega(d^{3/2})$ nodes. Therefore, there is some node $v$ in the intersection of  $\overline{D}$ and $\underline{D}$.
This is the $c_0 d^{3/2}$-critical node we seek.

\begin{remark}
A crude analysis shows that $c_0 \ge 1/8$ when $d_0 > 25$.
\end{remark}

\section{Rotations and the rotation poset} \label{sec:rotationGraph}
In this section we present a key theorem of Irving and Leather~\cite{irving1986complexity} and show
how it can be used to prove \autoref{thm:rotationreduction}.

\rotationReduction* 

\noindent
We begin with the definitions needed to prove \autoref{thm:rotationreduction}.
\begin{definition}[Rotation]
Let $k \ge 2$.
A {\em rotation} $\rho$  is an ordered list of pairs $$\rho=((m_0, w_0), (m_1, w_1), \ldots, (m_{k-1}, w_{k-1}))$$ that are matched in some stable matching $M$ with the property that for every $i$ such that $0\le i \le k-1$, woman $w_{i+1}$ (where the subscript is taken mod $k$)
is the highest ranked woman on $m_i$'s preference list satisfying:\begin{enumerate}
		\item [i)] man $m_i$ prefers $w_i$  to $w_{i+1}$, and
		\item [ii)] woman $w_{i+1}$ prefers
$m_i$ to $m_{i+1}$.
	\end{enumerate}
	In this case, we say $\rho$ is {\em exposed} in $M$.
	\label{defn:rotation}
\end{definition}

We will sometimes abuse notation and think of a rotation as the set containing those pairs. 
Also, we will need the following facts about rotations later.
\begin{lemma}[{\cite[Lemma 4.7]{irving1986complexity}}] \label{fact:oneAppearance}
	A pair $(m,w)$ can appear in at most one rotation.
\end{lemma}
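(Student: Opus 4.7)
The plan is to show that any rotation containing a specified pair $(m,w)$ is completely determined by $(m,w)$ alone; two rotations sharing $(m,w)$ must then coincide. Cyclically reordering if necessary, I assume $\rho=((m_0,w_0),\ldots,(m_{k-1},w_{k-1}))$ with $(m_0,w_0)=(m,w)$, and I argue by induction on $i$ that the pair $(m_{i+1},w_{i+1})$ is determined by $(m_i,w_i)$ and the instance, independently of the exposing stable matching $M$ or of $\rho$ itself.

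The inductive step rests on the following intrinsic characterization: $w_{i+1}$ is $m_i$'s highest-ranked \emph{attainable} partner strictly below $w_i$ on $m_i$'s preference list, where ``attainable'' means ``paired with $m_i$ in some stable matching.'' One direction is easy: $w_{i+1}$ is attainable because eliminating $\rho$ from $M$ produces a stable matching containing $(m_i,w_{i+1})$. For the other, I suppose some $w^\ast$ strictly between $w_i$ and $w_{i+1}$ on $m_i$'s list is attainable, witnessed by a stable matching $M^\ast$. Comparing $M$ and $M^\ast$ via the standard opposite-preferences lemma for stable matchings --- if a man prefers his partner in one matching to his partner in the other, then that other partner has the reverse preference --- shows that $w^\ast$ prefers $m_i$ to her partner in $M$. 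But then $w^\ast$ is a higher candidate than $w_{i+1}$ for the defining role of ``highest woman on $m_i$'s list below $w_i$ preferring $m_i$ to her $M$-partner,'' contradicting the definition of $\rho$.

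A dual characterization of $m_{i+1}$ as $w_{i+1}$'s highest attainable partner strictly below $m_i$ on her list will complete the induction; this is the main obstacle, since $m_{i+1}$ is specified only implicitly as $w_{i+1}$'s partner in $M$ rather than as an extremum. Assuming some $m^\ast$ strictly between $m_i$ and $m_{i+1}$ on $w_{i+1}$'s list is attainable via a stable matching $M^\ast$, stability of $M$ (in which $w_{i+1}$ is matched to the less-preferred $m_{i+1}$) forces $m^\ast$ to prefer his $M$-partner to $w_{i+1}$, while the opposite-preferences lemma applied to $M^\ast$ against the post-elimination matching $M'$ of $\rho$ from $M$ --- which contains $(m_i,w_{i+1})$ --- forces $m^\ast$ to prefer $w_{i+1}$ to his $M'$-partner. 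If $m^\ast\notin\rho$ the $M$- and $M'$-partners of $m^\ast$ coincide and the contradiction is immediate; if $m^\ast=m_j$ for some $j$ they are $w_j$ and $w_{j+1}$, and a brief check shows that $w_{i+1}$ lies strictly between $w_j$ and $w_{j+1}$ on $m_j$'s list and prefers $m_j$ to her $M$-partner $m_{i+1}$, making $w_{i+1}$ a strictly higher valid candidate than $w_{j+1}$ in the definition of $\rho$ --- again a contradiction. With both $w_{i+1}$ and $m_{i+1}$ characterized intrinsically by $(m_i,w_i)$, the induction closes and $\rho$ is determined by $(m,w)$.
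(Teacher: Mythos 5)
The paper does not prove this lemma at all: it is imported verbatim as Lemma~4.7 of Irving and Leather \cite{irving1986complexity}, so there is no in-paper argument to compare against. Judged on its own, your reconstruction is essentially correct and follows the standard route: you show that the successor of each pair in a rotation is determined intrinsically by the instance ($w_{i+1}$ is $m_i$'s best stable partner strictly below $w_i$, and $m_{i+1}$ is $w_{i+1}$'s best stable partner strictly below $m_i$), so the whole cycle is the orbit of $(m,w)$ under a fixed successor map. Note that the first of your two characterizations is exactly the paper's \autoref{fact:noSkipping} combined with the stability of the post-elimination matching, so half of your work could be replaced by a citation; the second characterization, which you correctly identify as the real content, is handled soundly, including the delicate case $m^\ast=m_j\in\rho$ where the contradiction comes from the maximality of $w_{j+1}$ in \autoref{defn:rotation}.

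Two small points to tighten. First, the ``opposite-preferences lemma'' you need is the cycle-decomposition version --- on each cycle of $M\oplus M^\ast$ all men prefer one matching and all women the other, so the conclusion is about $m_i$'s partner in the \emph{less}-preferred matching --- not the one-line blocking-pair version (which only tells you about the partner in the preferred matching); make sure you invoke the former. Second, the closing step deserves one more sentence: the induction shows both rotations trace the same deterministic orbit of $(m,w)$, but you must also rule out one rotation being a multiple period of the other. This follows because all pairs of a rotation lie in a single stable matching, so its men are distinct and the rotation is exactly one period of the orbit; with that observation the two rotations coincide.
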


\begin{lemma}[{\cite[Lemma 2.5.1]{gusfield1989stable}}] \label{fact:noSkipping}
	If $\rho$ is a rotation with consecutive pairs $(m_i, w_i),$ and  $(m_{i+1}, w_{i+1})$, and $w$ is a woman between $w_i$ and $w_{i+1}$ in $m_i$'s preference list, then there is no stable matching containing the pair $(m_i, w)$. 
\end{lemma}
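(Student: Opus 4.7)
The plan is to argue by contradiction: suppose $N$ is a stable matching with $N(m_i)=w$ for some woman $w$ strictly between $w_i$ and $w_{i+1}$ on $m_i$'s list. I first squeeze information out of Definition~\ref{defn:rotation}. Since $w$ sits below $w_i$ on $m_i$'s list, condition~(i) is automatically satisfied with $w$ in place of $w_{i+1}$. But $w$ sits strictly above $w_{i+1}$, and $w_{i+1}$ is by definition the highest-ranked woman on $m_i$'s list satisfying both (i) and (ii). Hence $w$ must fail (ii); reinterpreted for a general candidate, this says $w$ prefers her $M$-partner $m^{*}:=M(w)$ to $m_i$.

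The heart of the argument is a propagating chain that alternates between $M$ and $N$. Starting from $w^{(0)}:=w$ and $m^{(0)}:=m_i=N(w^{(0)})$, I define a sequence by $m^{(k+1)}:=M(w^{(k)})$ and $w^{(k+1)}:=N(m^{(k+1)})$, and maintain as an inductive invariant that $w^{(k)}$ prefers $m^{(k+1)}=M(w^{(k)})$ to $m^{(k)}=N(w^{(k)})$. The base case $k=0$ is precisely what we just derived. The inductive step invokes stability twice. First, stability of $N$ applied to the pair $(m^{(k+1)},w^{(k)})$—which $w^{(k)}$ would prefer to her $N$-partner by the invariant—forces $m^{(k+1)}$ to prefer $w^{(k+1)}$ to $w^{(k)}=M(m^{(k+1)})$. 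Then stability of $M$ applied to the pair $(m^{(k+1)},w^{(k+1)})$—which $m^{(k+1)}$ would now prefer to his $M$-partner $w^{(k)}$—forces $w^{(k+1)}$ to prefer her $M$-partner $m^{(k+2)}$ to $m^{(k+1)}=N(w^{(k+1)})$, establishing the invariant at step $k+1$.

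To close the loop, observe that $\phi(u):=N(M(u))$ is a bijection on the finite set of women, so the orbit $w,\phi(w),\phi^2(w),\ldots$ must return to $w$ after some number of steps $l\geq 1$. Tracing one step back from the return, there is a $k=l-1$ with $M(w^{(k)})=N^{-1}(w)=m_i$, hence $w^{(k)}=M^{-1}(m_i)=w_i$. Applying the invariant at this $k$ shows that $w_i$ prefers $m_i$ to her $N$-partner. Combined with the fact that $m_i$ prefers $w_i$ to $w=N(m_i)$ (since $w$ lies strictly below $w_i$ on $m_i$'s list), the pair $(m_i,w_i)$ blocks $N$, contradicting its stability.

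The delicate point is setting up the invariant cleanly: both applications of stability are genuinely needed, and in particular the stability of $M$ must be applied to a pair that is matched in $N$ (not in $M$) in order to propagate the invariant forward by one step. Once the invariant is in place, the termination via bijectivity of $\phi$ is short, and the contradictory blocking pair $(m_i,w_i)$ drops out essentially for free.
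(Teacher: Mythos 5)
Your proof is correct. Note first that the paper does not actually prove this lemma; it is imported as a black box from Gusfield and Irving (their Lemma 2.5.1), so there is no in-paper argument to compare against. What you have written is a valid self-contained derivation: the observation that a woman $w$ strictly between $w_i$ and $w_{i+1}$ on $m_i$'s list must fail condition (ii) of \autoref{defn:rotation} (hence prefers $M(w)$ to $m_i$) is exactly the right starting point, and your alternating chain is the standard ``symmetric difference of two stable matchings decomposes into cycles on which all women prefer one matching and all men the other'' argument, specialized to the cycle through $w$. The induction is sound; the one step worth making explicit is that stability of $N$ can legitimately be applied to the pair $(m^{(k+1)},w^{(k)})$ only because this pair is not in $N$ --- which follows from the strictness of the preference in your invariant, since $N(w^{(k)})=m^{(k)}$ and a strict preference forces $m^{(k)}\neq m^{(k+1)}$; similarly $(m^{(k+1)},w^{(k+1)})\notin M$. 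The termination via bijectivity of $\phi=N\circ M$ and the extraction of the blocking pair $(m_i,w_i)$ for $N$ are both correct. A marginally shorter route to the same contradiction is to quote the classical fact that for stable $M,N$ and any matched pair, the man prefers $M$ exactly when his partner prefers $N$: applying it to $(m_i,w)\in N$ gives that $m_i$ prefers $N$ to $M$ (since $w$ prefers $M$), contradicting $m_i$ preferring $w_i=M(m_i)$ to $w=N(m_i)$. Your cycle argument is essentially a from-scratch proof of that fact, so nothing is lost --- it is just less modular.
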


\begin{definition}[Elimination of a Rotation]
Let $\rho=((m_0, w_0), \ldots, (m_{k-1}, w_{k-1}))$ be a rotation exposed in stable matching $M$.  The rotation $\rho$ is {\em eliminated} from $M$ by
matching $m_i$ to $w_{(i+1)\bmod k}$, for all $0\leq i\leq k-1$, leaving all other pairs in $M$ unchanged, i.e., matching $M$ is replaced with matching $M'$, where
$$M'~:= M \backslash \rho ~\cup~\{(m_0, w_1), (m_1, w_2), \ldots, (m_{k-1}, w_0) \}.$$

Note that when we eliminate a rotation from $M$, the resulting matching $M'$ is stable.\footnote{Switching from $M$ to $M'$ makes all the women in $\rho$ happier and all the men in $\rho$ less happy. It is easy to check that this switch cannot create a blocking pair inside the set $\rho$. The only other possibility for a  blocking pair is a man in $\rho$ with a woman outside $\rho$. For $(m_i \in \rho,w \not \in \rho)$ to become a blocking pair, $m_i$ would have to prefer $w$ to $w_{i+1}$, but by the definition of rotation, $w_{i+1}$ was the first woman on $m$'s list who would prefer to be matched to him, so he cannot prefer $w$ to $w_{i+1}$. See also \cite[Lemma 2.5.2]{gusfield1989stable}.}
\end{definition}

\vspace{0.1in} 
Irving and Leather studied the following process: Fix a stable matching instance $\mathcal I$. Starting at the man-optimal matching\footnote{ A fascinating fact about stable matching is that there is a matching, known as the man-optimal matching, in which each man is matched with his favorite attainable partner. Recall that woman $w$ is attainable for man $m$ if there is some stable matching in which they are matched.}, choose a rotation in the current matching, and eliminate it. They show that for any stable matching $M$, there is a set of rotations, $R(M)$, one can eliminate (starting from the man-optimal matching) that will yield $M$. 

However, there is a partial order on the set of rotations -- some must be eliminated before others.\footnote{For example, a rotation containing a pair $(m, w)$ is not exposed (and thus cannot be eliminated) until that pair is matched, so a rotation with consecutive pairs $(m, w'), (m', w)$ must be eliminated first. The details of exactly when one rotation must be eliminated before another are not of direct use to us (we only require the rather coarse description in \autoref{lma:sharedAgentComparable}), so we do not describe them here. See \cite{irving1986complexity} or \cite{gusfield1989stable} for a full description of the poset.}
If $\rho$ must be eliminated before $\rho '$, we write $\rho \prec \rho'$. 
Let $\cR$ be the set of rotations for a stable matching instance, and let $\prec$ be that partial order on the rotations defined by elimination order. We call this poset the {\em rotation poset.} See \autoref{fig:examplePoset} for an example of a stable matching instance and the corresponding rotation poset.

For our purposes, the important result relating the rotation poset to stable matchings is the following.
\begin{theorem}[{\cite[Theorem 4.1]{irving1986complexity}}] \label{thm:downsetsToMatchings}
	For any stable matching instance, there is a one-to-one correspondence between downsets of the rotation poset and the set of stable matchings. In other words, the number of stable matchings is exactly equal to the number of downsets of $(\cR,\prec)$. 
\end{theorem}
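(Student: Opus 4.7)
The plan is to construct an explicit bijection between stable matchings of $\mathcal{I}$ and downsets of $(\cR,\prec)$, sending a stable matching $M$ to the set $R(M)$ of rotations eliminated along any sequence of rotation eliminations that starts at the man-optimal matching $M_0$ and ends at $M$. The paper's preliminaries already assert such a sequence exists for every stable matching $M$; the non-trivial content is (a) that $R(M)$ is well-defined, (b) that it is a downset of $(\cR, \prec)$, and (c) that $M \mapsto R(M)$ is a bijection.

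For (a), I would prove a confluence lemma: if two distinct rotations $\rho_1 \ne \rho_2$ are both exposed in the same stable matching $M'$, then after eliminating $\rho_1$ the rotation $\rho_2$ is still exposed, and the two eliminations commute (produce the same matching either way). By \autoref{fact:oneAppearance}, the pair sets of $\rho_1$ and $\rho_2$ are disjoint, so the men and women appearing in $\rho_2$ are matched to exactly the same partners before and after eliminating $\rho_1$; since exposure depends only on those pairs and the preference lists, it is preserved. A standard exchange argument then shows that any two elimination sequences from $M_0$ to $M$ use exactly the same set of rotations, so $R(M)$ is well-defined. Part (b) is immediate from the definition of $\prec$ as the elimination-order partial order: if $\rho \prec \rho'$ and $\rho' \in R(M)$, every $M_0$-to-$M$ sequence must eliminate $\rho$ before $\rho'$, so $\rho \in R(M)$.

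For (c), I would construct the inverse map by processing the rotations of a given downset $D$ in any topologically valid order and eliminating each one as it becomes exposed. Induction on $|D|$ shows the process completes and is order-independent: a minimal element $\rho$ of $D$ is also minimal in $\cR$ (since $D$ is downward closed), and hence is already exposed in $M_0$; after eliminating $\rho$ the remaining task corresponds to the strictly smaller downset $D \setminus \{\rho\}$, and the confluence lemma above gives order-independence. The identities that $M \mapsto R(M)$ and $D \mapsto M_D$ are mutual inverses then follow directly from the constructions, yielding the claimed bijection.

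The main obstacle is the confluence lemma together with the base-case claim that a minimal element of $\cR$ is exposed in $M_0$. Both require careful structural analysis of how one rotation elimination affects the preference-list positions relevant to another; the main tools are \autoref{fact:noSkipping}, which constrains what pair changes an elimination can induce, and \autoref{fact:oneAppearance}, which keeps distinct rotations pair-disjoint. Once those structural facts are leveraged, the bijection falls out as bookkeeping.
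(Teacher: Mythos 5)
First, a point of reference: the paper does not prove this statement at all --- it is imported verbatim as Theorem 4.1 of Irving and Leather, just as \autoref{fact:oneAppearance} and \autoref{fact:noSkipping} are imported. So there is no in-paper proof to compare against; your sketch is an outline of the standard Irving--Leather/Gusfield--Irving argument, and the outline is the right one. Two steps, however, are glossed in a way that hides the actual content. In the confluence lemma you assert that exposure of $\rho_2$ ``depends only on those pairs and the preference lists.'' It does not: by \autoref{defn:rotation}, $w_{i+1}$ must be the \emph{highest ranked} woman on $m_i$'s list satisfying (i) and (ii), so exposure of $\rho_2$ also requires that every woman strictly between $w_i$ and $w_{i+1}$ on $m_i$'s list prefer her \emph{current partner} to $m_i$ --- and those women need not belong to $\rho_1$ or $\rho_2$. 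Agent-disjointness of $\rho_1$ and $\rho_2$ (which does follow from \autoref{fact:oneAppearance}, since a shared agent exposed in the same matching would force a shared pair) is therefore not enough. The repair is the monotonicity fact that eliminating $\rho_1$ only improves the partners of the women it touches, so a woman who rejected $m_i$ before still rejects him after; this needs to be said.

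The second and more serious soft spot is in part (c). Your induction does not close as stated: after eliminating a minimal $\rho \in D$ from $M_0$, you need the minimal elements of $D \setminus \{\rho\}$ to be exposed in the \emph{new} matching, which is not an instance of your base case (``a minimal element of $\cR$ is exposed in $M_0$''). The statement actually needed is that after eliminating any downset $D'$, the rotations exposed in the resulting matching are exactly the minimal elements of $\cR \setminus D'$; the direction ``minimal in $\cR \setminus D'$ implies exposed'' is precisely where the structural work of Irving and Leather lives, and neither the confluence lemma nor \autoref{fact:noSkipping} delivers it for free. Relatedly, you take reachability of every stable matching from $M_0$ as given from the paper's expository text, but that text is itself just citing the theorem you are trying to prove; in a self-contained proof, reachability (surjectivity of $M \mapsto R(M)$ onto downsets, and the fact that $R(M)$ exists at all) is the heaviest piece, not bookkeeping. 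What your sketch does establish cleanly is well-definedness and downward-closure of $R(M)$ and injectivity given confluence.
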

Indeed, the downset corresponding to $M$ is exactly the set $R(M)$  discussed above.

Thus, to prove \autoref{thm:rotationreduction}, it remains to show that the rotation poset associated to any stable matching instance with a total of $n$ men and women is $n$-mixing. 
First we construct the chains $C_1,\dots,C_n$. We are going to have one chain for each agent (man or woman), where the corresponding chain contains all the rotations that include that agent.
Call these sets $C_1,\dots,C_n$.
To prove that $C_1,\dots,C_n$ is $n$-mixing, we first need to show that each $C_i$ is indeed a chain,  i.e., every pair of rotations where a specific agent appears are comparable and second we need to show that $C_1,\dots,C_n$ satisfy property (ii) of \autoref{def:nmixing}. 

\begin{claim} \label{lma:sharedAgentComparable}
If two rotations share an agent, then they are comparable.\footnote{This observation is not novel; for example, it is implicit in the discussion of \cite{gusfield1989stable}, but we have not seen the statement explicitly written down, so we prove it here.}
\end{claim}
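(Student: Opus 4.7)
The plan is to exploit the monotonicity of each agent's partner throughout the process of rotation eliminations. Starting from the man-optimal matching and proceeding by any valid sequence of eliminations, whenever a man $m$ participates in a rotation his partner moves from $w$ to the next woman after $w$ in that rotation, whom he prefers \emph{less} (by condition (i) of Definition~\ref{defn:rotation}). Hence the sequence of partners of $m$ across the entire elimination process is strictly monotone downward on his preference list; by the symmetric role of men and women in the definition of a rotation, the sequence of partners of any woman $w$ is strictly monotone upward on her list.

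Given this, I would take two rotations $\rho_1,\rho_2$ sharing an agent and assume without loss of generality that this agent is a man $m$ (the shared-woman case is completely symmetric). Let $w_1$ and $w_2$ be the partners of $m$ in $\rho_1$ and $\rho_2$ respectively. By \autoref{fact:oneAppearance}, the pair $(m,w_1)$ lies in no rotation other than $\rho_1$ and $(m,w_2)$ in no rotation other than $\rho_2$, so $w_1 \ne w_2$. Assume without loss of generality that $m$ prefers $w_1$ to $w_2$.

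Now I would argue that $\rho_1 \prec \rho_2$. The point is that for $\rho_i$ to be exposed and eliminated, $m$ must be matched to $w_i$ at that moment, since $(m,w_i)$ is a pair of $\rho_i$. By the monotonicity above, the unique step at which $m$ is matched to the higher-ranked $w_1$ precedes the unique step at which he is matched to the lower-ranked $w_2$. Consequently, in every valid elimination sequence $\rho_1$ is eliminated before $\rho_2$, which by definition of the rotation poset means $\rho_1 \prec \rho_2$, so $\rho_1$ and $\rho_2$ are comparable.

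The main conceptual obstacle is simply the monotonicity property and the identification of ``$\rho \prec \rho'$'' with ``$\rho$ precedes $\rho'$ in every valid elimination order''; both are standard facts from Irving and Leather's framework \cite{irving1986complexity} which I would invoke rather than reprove. Everything else is a short combinatorial consequence of Definition~\ref{defn:rotation} and \autoref{fact:oneAppearance}.
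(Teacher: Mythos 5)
Your overall strategy---arguing directly that $\rho_1$ must precede $\rho_2$ in every elimination order via the monotone descent of $m$'s partner---is a legitimate alternative to the paper's proof, which instead proceeds by contradiction. But as written there is a gap at the crucial step, and it is exactly the point where the paper has to do real work. You write that ``the unique step at which $m$ is matched to the higher-ranked $w_1$ precedes the unique step at which he is matched to the lower-ranked $w_2$.'' Uniqueness does follow from monotonicity, but \emph{existence} does not: monotonicity is perfectly consistent with an elimination sequence in which $m$'s partner jumps from some woman he prefers to $w_1$ directly to some woman he prefers less than $w_1$, so that $m$ is never matched to $w_1$ at all. In such a sequence $\rho_1$ is never exposed, $\rho_2$ can still be eliminated, and your conclusion fails. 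Monotonicity alone shows only that $\rho_1$ cannot be eliminated \emph{after} $\rho_2$; it does not show that it must be eliminated \emph{before} (it could a priori be skipped in the sequences witnessing incomparability).

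Closing the gap requires one more ingredient, and the paper's proof is organized around precisely this point: \autoref{fact:noSkipping}. Since $(m,w_1)$ is a pair of the rotation $\rho_1$, it is matched in some stable matching, so no rotation can move $m$ past $w_1$ without stopping there; hence $m$ must be matched to $w_1$ at some step before reaching $w_2$, and by \autoref{fact:oneAppearance} the only rotation that can then move him off $w_1$ is $\rho_1$. (Alternatively, you could invoke Irving and Leather's fact that every rotation is eliminated exactly once in every complete elimination sequence, which also forces the $w_1$ step to exist; but this is a fact you did not list among those you intended to import.) With that one addition your direct argument goes through and is arguably cleaner than the paper's, which reaches the same no-skipping contradiction by eliminating the downset of $\rho_2$ and locating the rotation that first pushes $m$ below $w_1$.
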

\begin{proof}   
	First, suppose that the shared agent is a man. If it is a woman, we can just switch the designations of ``men'' and ``women'' and use the same proof on the ``reversed'' version of the rotation graph.
	Let $\rho_1,\rho_2$ be rotations sharing an agent $m$, and let $m$ be matched to $w_1$ in $\rho_1$ and $w_2$ in $\rho_2$, where $m$ prefers $w_1$ to $w_2$. 
	
	For the sake of contradiction assume that $\rho_1$ and $\rho_2$ are incomparable.
	We show that there exists a rotation $\rho$ which causes $m$ to skip over $w_1$. This would contradict \autoref{fact:noSkipping} as it implies that $(m,w_1)$ belongs to no stable matching. 
	  
	Suppose we start from the man optimal stable matching and eliminate all rotations dominated by $\rho_2$ 
	and let $M$ be the resulting stable matching.
	By the correspondence in \autoref{thm:downsetsToMatchings}, $(m,w_2)\in M$. Since $m$ prefers $w_1$ to $w_2$, there must be a rotation $\rho$ that we eliminated which caused $m$ to be matched to someone worse than $w_1$ for the first time. Since $\rho_2$ and $\rho_1$ are incomparable, $\rho_1\neq \rho$. Therefore, by  \autoref{fact:oneAppearance} $(m,w_1)\notin \rho$; so, $\rho$ caused  $m$ to skip over	 $w_1$. This is a contradiction.
\end{proof}

\begin{claim} \label{lma:manyAgents}
Every set of $k$ rotations contains at least $2\sqrt{k}$ agents.
\end{claim}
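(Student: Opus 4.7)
The plan is to use a double-counting argument relating agents to pairs, together with the fact that each rotation contributes at least two pairs and no pair is reused across rotations (\autoref{fact:oneAppearance}).

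Fix a set $\mathcal S$ of $k$ rotations, and let $a_M$ and $a_W$ denote the number of distinct men and women, respectively, that appear in at least one rotation of $\mathcal S$. The goal is to show $a_M + a_W \ge 2\sqrt{k}$.

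First I would count the total number of (man, woman) pairs that appear across the rotations in $\mathcal S$. By \autoref{defn:rotation}, every rotation contains at least $2$ pairs, and by \autoref{fact:oneAppearance} each pair appears in at most one rotation. Hence the pairs contributed by the $k$ rotations are all distinct, giving at least $2k$ distinct pairs in total. On the other hand, every such pair consists of one of the $a_M$ men and one of the $a_W$ women, so the total number of pairs is at most $a_M \cdot a_W$. Combining, $a_M a_W \ge 2k$.

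Finally, applying the AM-GM inequality to $a_M$ and $a_W$,
\[
a_M + a_W \;\ge\; 2\sqrt{a_M a_W} \;\ge\; 2\sqrt{2k} \;\ge\; 2\sqrt{k},
\]
which is the desired bound. There is no real obstacle here; the only point to be careful about is that the rotations in $\mathcal S$ may have varying lengths, which is why I invoke the uniform lower bound of $2$ pairs per rotation rather than trying to argue rotation-by-rotation. Note that the argument actually gives the slightly stronger bound $2\sqrt{2k}$, which is more than enough for property (ii) of $n$-mixing.
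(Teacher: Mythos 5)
Your proof is correct and is essentially the paper's argument: both bound the number of distinct (man, woman) pairs available among the appearing agents by $a_M a_W \le \left(\frac{a_M+a_W}{2}\right)^2$ and compare against the distinct pairs forced by \autoref{fact:oneAppearance}. The paper runs this as a contrapositive using only one pair per rotation, while you argue directly and use the fact that each rotation has at least two pairs to get the marginally stronger bound $2\sqrt{2k}$; either version suffices.
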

\begin{proof}
	We argue by contrapositive. Suppose we have a set of rotations involving fewer than $2\sqrt{k}$ agents. Every rotation contains a (man, woman) pair, who (by \autoref{fact:oneAppearance}) have not appeared together before. With fewer than $2\sqrt{k}$ agents, there are strictly less than $k$ (man, woman) pairs which can appear, and thus fewer than $k$ rotations in the set.  
\end{proof} 

\section{Conclusion} \label{sec:conclusion}
We have shown there is some constant $c$ such that $f(n) \leq c^n$. We have not made a significant effort to optimize the constants in our argument, favoring ease of exposition over the exact result. By making a few minor changes to the argument,  
we obtain $f(n) \leq 2^{17n}$ for sufficiently large $n$. A more careful argument could probably improve this constant somewhat, but this approach will not get a constant $c$ close to the (approximately) $2.28$ we would need to match the best known lower bound. Determining the precise asymptotic behavior of $f(n)$ remains an interesting open problem.

\section*{Acknowledgements} 

The first author gratefully acknowledges the support of NSF grant
CCF 1420381.
The second author gratefully acknowledges the support of NSF grant CCF-1552097 and ONR-YIP grant N00014-17-1-2429.

\begin{figure}

\begin{multicols}{2}

\includegraphics[width=.5\textwidth]{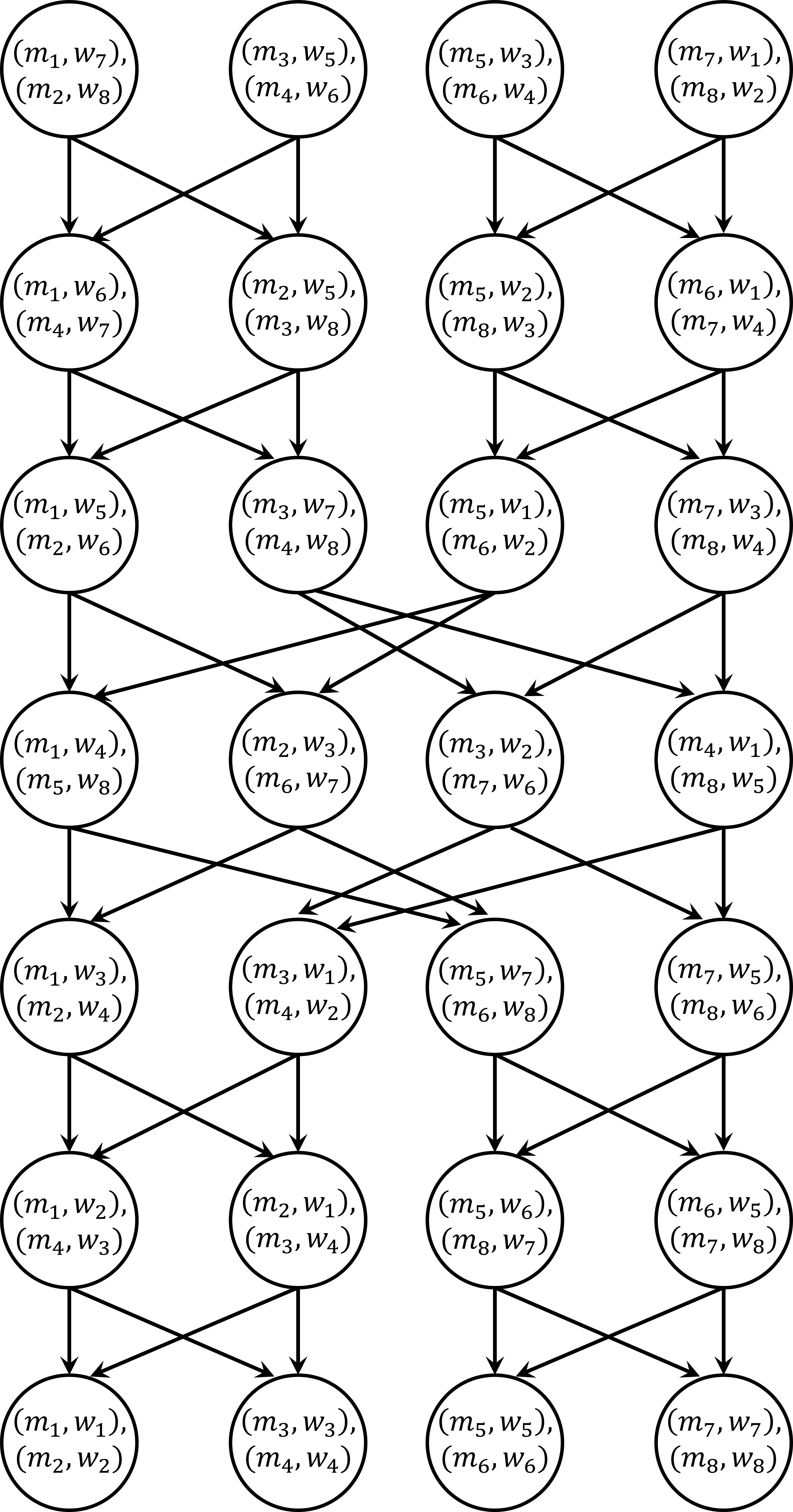}
\\
\vspace{0.5in}

Men's preferences:\\
\begin{tabular}{ | c || c | c | c | c || c | c | c | c |}
	\hline
	 
	$m_1$ & $w_1$ & $w_2$ & $w_3$ & $w_4$ & $w_5$ & $w_6$ & $w_7$ & $w_8$ \\ \hline
	$m_2$ & $w_2$ & $w_1$ & $w_4$ & $w_3$ & $w_6$ & $w_5$ & $w_8$ & $w_7$ \\ \hline
	$m_3$ & $w_3$ & $w_4$ & $w_1$ & $w_2$ & $w_7$ & $w_8$ & $w_5$ & $w_6$ \\ \hline
	$m_4$ & $w_4$ & $w_3$ & $w_2$ & $w_1$ & $w_8$ & $w_7$ & $w_6$ & $w_5$ \\ \hline \hline
	$m_5$ & $w_5$ & $w_6$ & $w_7$ & $w_8$ & $w_1$ & $w_2$ & $w_3$ & $w_4$ \\ \hline
	$m_6$ & $w_6$ & $w_5$ & $w_8$ & $w_7$ & $w_2$ & $w_1$ & $w_4$ & $w_3$ \\ \hline
	$m_7$ & $w_7$ & $w_8$ & $w_5$ & $w_6$ & $w_3$ & $w_4$ & $w_1$ & $w_2$ \\ \hline
	$m_8$ & $w_8$ & $w_7$ & $w_6$ & $w_5$ & $w_4$ & $w_3$ & $w_2$ & $w_1$ \\  \hline
\end{tabular}
	
	\vspace{1in}
Women's preferences:\\
\begin{tabular}{ | c || c | c | c | c || c | c | c | c |}	
\hline
	$w_1$ & $m_8$ & $m_7$ & $m_6$ & $m_5$ & $m_4$ & $m_3$ & $m_2$ & $m_1$ \\ \hline 
	$w_2$ & $m_7$ & $m_8$ & $m_5$ & $m_6$ & $m_3$ & $m_4$ & $m_1$ & $m_2$ \\ \hline
	$w_3$ & $m_6$ & $m_5$ & $m_8$ & $m_7$ & $m_2$ & $m_1$ & $m_4$ & $m_3$ \\ \hline
	$w_4$ & $m_5$ & $m_6$ & $m_7$ & $m_8$ & $m_1$ & $m_2$ & $m_3$ & $m_4$ \\ \hline \hline
	$w_5$ & $m_4$ & $m_3$ & $m_2$ & $m_1$ & $m_8$ & $m_7$ & $m_6$ & $m_5$ \\ \hline
	$w_6$ & $m_3$ & $m_4$ & $m_1$ & $m_2$ & $m_7$ & $m_8$ & $m_5$ & $m_6$ \\ \hline
	$w_7$ & $m_2$ & $m_1$ & $m_4$ & $m_3$ & $m_6$ & $m_5$ & $m_8$ & $m_7$ \\ \hline
	$w_8$ & $m_1$ & $m_2$ & $m_3$ & $m_4$ & $m_5$ & $m_6$ & $m_7$ & $m_8$ \\ \hline 
\end{tabular}

\end{multicols}

\caption{A size-$8$ instance of stable matching with its rotation poset. This is part of the family of instances described by Irving and Leather, which produces the $\Omega(2.28^n)$ lower bound.}
\label{fig:examplePoset}
\end{figure}

 \bibliographystyle{alpha}
\bibliography{matching_bib}

\end{document}